\documentclass[12pt]{article}
\usepackage{amsfonts}
\usepackage{amsmath}
\usepackage{pb-diagram}
\usepackage[all]{xy}

\usepackage[english]{babel}
\usepackage[T1]{fontenc}

\usepackage[top=30truemm,bottom=30truemm,left=25truemm,right=25truemm]{geometry}

\newtheorem{theorem}{Theorem}
\newtheorem{definition}{Definition}
\newtheorem{lemma}{Lemma}
\newtheorem{remark}{Remark}

\newtheorem{proposition}{Proposition}

\newtheorem{proof}{Proof}

\def\<{\langle}
\def\>{\rangle}

\begin{document}

\centerline{\bf A Construction of Dynamical Entropy on CAR Algebras}

\bigskip\bigskip

\centerline{\sc Kyouhei Ohmura}
\vspace{+1mm}
\centerline{\it Department of Information Sciences,}
\centerline{\it Tokyo University of Science,}
\centerline{\it Noda City, Chiba 278-8510, Japan}
\centerline{E-mail: {\tt 6317701@ed.tus.ac.jp, \quad ohmura.kyouhei@gmail.com}}
\bigskip
\centerline{\sc Noboru Watanabe}
\vspace{+1mm}
\centerline{\it Department of Information Sciences,}
\centerline{\it Tokyo University of Science,}
\centerline{\it Noda City, Chiba 278-8510, Japan}
\centerline{E-mail: {\tt watanabe@is.noda.tus.ac.jp}}

\bigskip\bigskip


\centerline{\bf Abstract }

\bigskip
\noindent
The dynamical entropy on von Neumann algebras defined by Accardi, Ohya and Watanabe (AOW entropy) is a natural noncommutative extension of the classical dynamical entropy. 
On the other hand, quantum spin lattice systems currently used in quantum computing and communication processes  are mathematically described by $C^*$-algebras called CAR algebras. Therefore, in order to obtain the average amount of quantum information and to calculate the uncertainty of the dynamics of quantum spin systems, it is necessary to define dynamical entropy on CAR algebras. In this paper, we formulate dynamical entropy on CAR algebras based on the construction of the AOW entropy. Moreover, we compute the introduced entropy for a $2 \times 2$ matrix algebra case which has relation to the quantum spin system. 

\bigskip
\noindent 
{\it Keywords:} Quantum Information Theory; Quantum Entropy; Quantum Dynamical Entropy; Quantum Markov Processes; Quantum Statistical Mechanics.

\tableofcontents


\section{Introduction}
The classical dynamical entropy for a measure-preserving invertible transformation of a Lebesgue space has important roles in both pure mathematics and classical information theory. In mathematical side, it shows that two classical dynamical systems are isomorphic. In classical information theory, the dynamical entropy gives the average of infomation amount of an information source \cite{bili}, \cite{kol}.\\
\quad There are several ways to define the dynamical entropy on noncommutative algebras \cite{acnote}, \cite{sto}, \cite{itsuse}. In \cite{aow}, using quantum Markov chains on matrix algebras \cite{ncm}, \cite{qmcuni}, Accardi, Ohya and Watanabe formulated the dynamical entropy on von Neumann algebras. The entropy is called AOW entropy and is a natural noncommutative (or quantum) extension of the classical dynamical entropy. Moreover, due to the simplicity of its formulation, one can easily obtain the average amount of quantum information for quantum dynamical systems using the AOW entropy.\\
\quad Incidentally, in recent years, quantum spin lattice systems are used in quantum computing and quantum communication processes. Since the observables of  spin systems are expressed by the elements of  CAR algebras, it is necessary to define dynamical entropy on this algebra to discuss the dynamics and the average information of spin systems strictly. \\
\quad Therefore, in this paper, we define dynamical entropy on CAR algebras based on the construction method of AOW entropy and the definition of Markov chains on CAR algebras given by Accardi, Fidaleo and Mukhamedov \cite{afm}.\\
\quad We organize the paper as follows. In section 2 we recall the definition of the AOW entropy, namely the dynamical entropy through a quantum Markov chain on matrix algebras. In section 3, we briefly review the definitions of the CAR algebra and the Fermion Fock space which gives the algebra. Section 4 is devoted to the notions of Markov states and chains on CAR algebras. In section 5,  we formulate dynamical entropy on CAR algebras based on the construction of the AOW entropy, the defintion of Markov chains on CAR algebras, and using an Umegaki conditional expectation on CAR algebras. In section 6 we calculate our dynamical entropy for a simple model associated with the two-state system.


\section{AOW entropy}
In this section, we construct dynamical entropy on von Neumann algebras using quantum Markov chains on matrix algebras and a noncommutative extension of measurable partitions of a metric space.\\

\noindent
Let - a Hilbert space: $\mathcal{H}$,\\
- a von Neumann algebra with an identity operator $1_{\mathcal{A}}$ acting on $\mathcal{H}$: $\mathcal{A}$,\\
- the set of all normal states on $\mathcal{A}$: $\mathfrak{S}(\mathcal{A})$,\\
- a *-automorphism on $\mathcal{A}$: $\theta$,\\
- the set of all $d \times d $ matrices on $\mathbb{C}$: $M_d$,\\
- the tensor product of $\mathbb{N}$ copies of $M_d$: $\otimes_{\mathbb{N}}M_d$.\\
Then the triplet $(\mathcal{A}, \mathfrak{S}(\mathcal{A}), \theta)$ describes the dynamics of a quantum system.\\
Furthermore, let $\gamma := \{ \gamma_j \} $ be a finite orthogonal partition of $1_{\mathcal{A}} \in \mathcal{A}$, i.e. 
$$
\sum_{j} \gamma_{j} = 1_{\mathcal{A}}\quad , \quad \gamma_{i}\gamma_{j} = \delta_{ij} \gamma_{j},
$$
and $E_e$ be a completely positive map from $M_d \otimes \mathcal{A}$ to $\mathcal{A}$:
\begin{equation}
E_e (\sum_{i,j} e_{ij} \otimes A_{ij}) = \sum_{i} A_{ii},
\end{equation}
with the matrix unit $e_{ij} \in M_d$. Then a transition expectation \cite{ncm}, \cite{afl} $\mathcal{E}_{\gamma, \theta}: M_{d}\otimes \mathcal{A} \to \mathcal{A}$ with respect to $\theta$ is given by
\begin{equation}
\mathcal{E}_{\gamma , \theta} (a \otimes b) :=  \theta \circ E_e (p_{\gamma , e}^{*} (a \otimes b) p_{\gamma , e}) \quad , \quad a \otimes b \in M_d \otimes \mathcal{A},
\end{equation}
where $p_{\gamma ,e} := \sum_{j} e_{jj}\otimes \gamma_{j}$.
In the above notations, {\it a quantum Markov chain} on $\otimes_{\mathbb{N}} M_d$ is defined by
\begin{equation}
\psi := \{ \varphi , \mathcal{E}_{\gamma , \theta} \} \in \mathfrak{S}(\otimes^{\mathbb{N}} \mathit{M_{d}}) ,
\end{equation}
where $\varphi$ is called the {\it initial distribution} of $\psi$. Then $\psi = \{ \varphi , \mathcal{E}_{\gamma , \theta} \}$ is given by
\begin{equation}
\psi (j_1 (a_1)j_2 (a_2) \cdots j_n (a_n))  = \varphi  (\mathcal{E}_{\gamma , \theta} (a_1 \otimes \mathcal{E}_{\gamma , \theta} ( \mathit{a_{\rm2}} \otimes  \cdots \otimes \mathcal{E}_{\gamma , \theta } ( \mathit{a_n} \otimes 1_{\mathcal{A}}) \cdots ))) \ ,\ n \in \mathbb{N}, a_i \in M_d,
\end{equation}
where $j_k$ is an embedding from $a \in M_d$ into the $k$-th factor of  $\otimes_{\mathbb{N}} M_d$, i.e.
\begin{equation}
j_k (a) := 1 \otimes \cdots \otimes 1 \otimes \overset{k-th}{a} \otimes 1 \otimes \cdots
\end{equation}
Let $\varphi$ be a stationary state. Then there exists unique density operator $\rho$ such that 
$$
\varphi (A) = \mathrm{Tr} \rho A\quad ,\quad \forall A \in \mathcal{A}.
$$
For any $a_1\otimes \cdots \otimes a_n \otimes 1_{\mathcal{A}} \in M_d \otimes \cdots \otimes M_d \otimes \mathcal{A}$, we have
$$
\psi (j_1 (a_1)j_2 (a_2) \cdots j_n (a_n))
= \varphi (\mathcal{E}_{\gamma , \theta} (\mathit{a}_{\mathrm{1}}\otimes \mathcal{E}_{\gamma , \theta} (\mathit{a}_{\mathrm{2}} \otimes \cdots \otimes \mathcal{E}_{\gamma , \theta }(\mathit{a}_n \otimes 1_{\mathcal{A}} \cdots ))))
$$
\begin{eqnarray}
&=& \mathrm{Tr}_{(\otimes_{1}^{n}M_d)  \otimes \mathcal{A}}\it \sum_{i_{\rm1}} \cdots \sum_{i_{n{\rm -1}}} \sum_{i_n} e_{i_{\rm 1} i_{\rm1}}\otimes \cdots \otimes e_{i_{n-{\rm 1}} i_{n-{\rm 1}}} \otimes e_{i_n i_n} \nonumber\\
&\ &\quad \otimes \gamma_{i_n}\theta^{*} (\gamma_{i_{n - 1}} \cdots \theta^{*}(\gamma_{i_{1}}\rho \gamma_{i_1}) \cdots \gamma_{i_{n-1}})\gamma_{i_n} (a_1 \otimes \cdots\otimes a_{n-1} \otimes a_n \otimes  1_{\mathcal{A}})\nonumber \\
&=& \mathrm{Tr}_{(\otimes_{1}^{n}M_d)\otimes \mathcal{A}}\rho_{[0,{\it n}]}(a_{1} \otimes \cdots \otimes a_{n{\rm-1}} \otimes 1_{\mathcal{A}}) \nonumber \\
&=& \mathrm{Tr}_{(\otimes_{1}^{n}M_d)}\it \sum_{i_{\rm 1}} \cdots \sum_{i_{n -{\rm 1}}}\sum_{i_n} (\mathrm{tr}_{\it \mathcal{A}} \it \gamma_{i_n} \theta^{*}(\gamma_{i_{n-{\rm 1}}} \cdots \theta^{*}(\gamma_{i_{\rm 1}} \rho \gamma_{i_{\rm 1}}) \cdots \gamma_{i_{n-{\rm 1}}})\gamma_{i_{n}}) \nonumber \\
&\ &\quad e_{i_1 i_1}\otimes \cdots \otimes e_{i_{n-{\rm 1}} i_{n-{\rm 1}}} \otimes e_{i_n i_n} (a_1 \otimes \cdots \otimes a_{n-1} \otimes a_n) \nonumber\\
&=& \mathrm{Tr}_{(\otimes_{1}^{n}M_d)}\rho_{\it n} (\it a_{\rm 1}\otimes \cdots \otimes a_{n-{\rm 1}} \otimes a_n) ,\nonumber
\end{eqnarray}
where
\begin{eqnarray*}
\rho_{[0,n]} &:=& \sum_{i_1} \cdots \sum_{i_{n-1}} \sum_{i_n} e_{i_1 i_1}\otimes \cdots \otimes e_{i_{n-1} i_{n-1}} \otimes e_{i_n i_n} \\
&\ & \quad \otimes \gamma_{i_n}\theta^{*} (\gamma_{i_n - 1} \cdots \theta^{*}(\gamma_{i_{1}}\rho \gamma_{i_1}) \cdots \gamma_{i_{n-1}})\gamma_{i_n}, \\
\rho_n &:=& \mathrm{Tr}_{\mathcal{A}} \rho_{[0, {\it n}]} .
\end{eqnarray*}
Hence we obtain the density operator $\rho_n$ of $\varphi_n$ on $\otimes^n_1 M_d$. Denoting
\begin{equation}
\Gamma_{i_n \cdots i_1} := \theta^{n-1}(\gamma_{i_n}) \cdots \theta (\gamma_{i_2}) \gamma_{i_1}
\end{equation}
and
\begin{equation}\label{ck}
P_{i_1, \cdots i_n} := \mathrm{Tr}_{\mathcal{A}} \Gamma_{i_n \cdots i_1}\rho \Gamma^{*}_{i_n \cdots i_1} = \mathrm{Tr}_{\mathcal{A}} |\Gamma_{i_n \cdots i_1}|^2 \rho ,
\end{equation}
Accardi, Ohya and Watanabe defined the entropy with respect to $\gamma, \theta$ and $n$ as
\begin{equation}
S_n(\gamma , \theta) := -\mathrm{Tr} \rho_n \log \rho_n = -\sum_{i_{\rm 1} , \cdots , i_n}P_{i_1, \cdots i_n}\log P_{i_1, \cdots i_n }  .
\end{equation}
\begin{definition}
The dynamical entropy through a quantum Markov chain with respect to $\gamma$ and $\theta$ is given by
\begin{eqnarray}
\tilde{S}(\gamma ; \theta) &:=& \limsup_{n \to \infty} \frac{1}{n}S_n (\gamma , \theta) \nonumber \\
&=&  \limsup_{n \to \infty} \frac{1}{n}(-\sum_{i_1 \cdots i_n}P_{i_1, \cdots , i_n}\log P_{i_1, \cdots , i_n}) .
\end{eqnarray}
\end{definition}

\noindent
The above entropy is called {\it AOW entropy}. 

\begin{remark}
Then $P_{i_1 , \cdots , i_n}$ (\ref{ck}) is the time ordered correlation kernel \cite{afl}, \cite{fqp} over $\mathcal{A}$.
\end{remark}

\begin{remark}
The AOW entropy was extended to dynamical mutual entropy and is used to study quantum communication processes \cite{ow}, \cite{muto}.
\end{remark}


\section{CAR Algebras}
The CAR algebra  is a $C^*$-algebra generated by the observables of Fermion systems. In this section, we recall the basic mathematical definitions and physical backgrounds of this algebra \cite{br2}.


\subsection{Fermion Fock spaces}
First, we fix $N \in \mathbb{N}$. Let $\mathcal{H}$ be a Hilbert space of the state of 1-particle. Then the Hilbert space of states of many body systems is give by
\begin{equation}
\mathcal{H}_N := \bigotimes_N \mathcal{H} .
\end{equation}
Then $\mathcal{H}_N$ is the Hilbert space that $N$-particles can be distinguished from each other. Moreover, $\mathcal{P}_N$ denotes a $N$-dimensional group of permutations, i.e.
\begin{equation}
\mathcal{P}_N := \{ \sigma \ ;\ \{ 1, \cdots , N \} \to \{ 1, \cdots , N \} \ ,\ \sigma\ {\rm is\ injective} \}.
\end{equation}
If $\sigma$ is even (resp. odd), we put ${\rm sgn} (\sigma) = 1$ (resp. ${\rm sgn}(\sigma) = -1$) where sgn is sign of $\sigma$.\\

\noindent
For each $\sigma \in \mathcal{P}_N$, there exists a unique unitary operator $P_{\sigma}$ on $\mathcal{H}_N$ satisfying
\begin{equation}
P_{\sigma} (x_1 \otimes \cdots \otimes x_N ) = x_{\sigma (1)} \otimes \cdots \otimes x_{\sigma (N)}\quad , \quad x_j \in \mathcal{H}\ ,\ j \in \{ 1, \cdots , N\}
\end{equation}
and
\begin{equation}
P_{\sigma} P_{\tau} = P_{\tau \sigma}\quad , \quad \sigma \ ,\ \tau \in \mathcal{P}_N .
\end{equation}
$P_{\sigma}$ is called a {\it permutation operator} with respect to a permutation $\sigma$. Using $P_{\sigma}$, we can classify the vectors of $\mathcal{H}_N$.

\begin{definition}
For any $\sigma \in \mathcal{P}_N$,
\begin{enumerate}
\item if $P_{\sigma} x = x$ holds, $x$ is called {\it symmetric}.
\item if $P_{\sigma} x = {\rm sgn}(\sigma) x$ holds, $x$ is called {\it anti-symmetric}.
\end{enumerate}
\end{definition}

\noindent
Put
\begin{equation}
\bigwedge^N (\mathcal{H}) := \{ x \in \mathcal{H}_N\ ;\ x \ {\rm is\ anti-symmetric} \}.
\end{equation}
The elements of $\bigwedge^N (\mathcal{H})$ are called {\it Fermions}.

\begin{remark}
For any $x_1 , \cdots , x_N \in \mathcal{H}$,
$$
x_1 \bigwedge x_2 \bigwedge \cdots \bigwedge x_N := \frac{1}{\sqrt{N!}} \sum_{\sigma \in \mathcal{P}_N} \mathrm{sgn} (\sigma) x_{\sigma (1)} \otimes \cdots \otimes x_{\sigma (N)}
$$
is an anti-symmetric vector. For any $i, j \in \{ 1, \cdots , N \}$ $(i \neq j)$,
$$
x_1 \bigwedge \cdots \bigwedge x_i \bigwedge \cdots \bigwedge x_j \bigwedge \cdots \bigwedge x_N = - x_1 \bigwedge \cdots \bigwedge x_j \bigwedge \cdots \bigwedge x_i \bigwedge \cdots \bigwedge x_N 
$$
holds. Especially, if $x_i = x_j$, one can see that
\begin{equation}\label{pauli}
x_1 \bigwedge x_2 \bigwedge \cdots \bigwedge x_N = 0.
\end{equation}
(\ref{pauli}) implies that two fermions can not exist in the same 1-particle state. The property is called the {\it Pauli exclusion principle}. We will see the algebraic representation of this principle below.
\end{remark}

\noindent
\quad In general, the number of particles in many body systems can change. Therefore one has to consider the  generalized Hilbert space of $\mathcal{H}_N$:
\begin{equation}
\mathcal{F}(\mathcal{H}) := \bigoplus_{N=0}^{\infty} \mathcal{H}_N 
\end{equation}
$$
= \left\{ x = \{ x^{(N)} \}_{N=0}^{\infty} \ ;\ \ x^{(N)} \in \mathcal{H}_N , N \geq 0 ,\ \sum_{N=0}^{\infty} \| x^{(N)}\|_{\mathcal{H}_N}^2 < \infty \right\}
$$
where $\mathcal{H}_0 = \mathbb{C}$. Then the scalar product of $\mathcal{F}(\mathcal{H})$ is defined by
$$
\<x, y \> := \sum_{N=0}^{\infty} \< x^{(N)}, y^{(N)} \>_{\mathcal{H}_N}.
$$
Hilbert space $\mathcal{F}(\mathcal{H})$ is called a {\it full Fock space}.

\begin{definition}
Let $\bigwedge^0 (\mathcal{H}) := \mathbb{C}$.
\begin{equation}
\mathcal{F}_f (\mathcal{H}) := \bigoplus_{N=0}^{\infty} \bigwedge^N (\mathcal{H}) 
\end{equation}
$$
= \left\{ x = \{ x^{(N)}\}_{N=0}^{\infty} \ ;\ x^{(N)} \in \bigwedge^N (\mathcal{H}),\ N \geq 0,\ \sum_{N=0}^{\infty} \| x^{(N)} \|^2 < \infty \right\}
$$
is called a {\it Fermion Fock space}.
\end{definition}
$\mathcal{F}_f (\mathcal{H})$ is a closed subspace of the full Fock space $\mathcal{F}(\mathcal{H})$.
\begin{definition}
A subspace of $\mathcal{F}_f (\mathcal{H})$:
\begin{equation}
\mathcal{F}_{f, 0} (\mathcal{H}) := \{ x \in \mathcal{F}_f (\mathcal{H}) \ ;\ {\rm If \ there\ exists}\ N_0\ {\rm such \ that} \ N \geq N_0,\ x^{(N)} = 0 \} 
\end{equation}
is called a {\it finite particles subspace} and is dense in $\mathcal{F}_f (\mathcal{H})$.
\end{definition}


\subsection{CAR Algebras}
Let $\mathcal{D}$ be a dense subspace of $\mathcal{H}$ and let
\begin{equation} 
\mathcal{F}_{f, fin}(\mathcal{D}) :=  \{ x \in \mathcal{F}_f (\mathcal{H}) \ ;\ x^{(N)} \in A_N (\widehat{\otimes}^N \mathcal{D}),\ N \geq 0, 
\end{equation}
$$
{\rm If \ there\ exists}\ N_0\ {\rm such \ that} \ N \geq N_0,\ x^{(N)} = 0 \} 
$$
where $A_N := \frac{1}{N!} \sum_{\sigma \in \mathcal{P}_N} P_{\sigma}$. Then $\mathcal{F}_{f, fin}(\mathcal{D})$ is called a {\it finite particles subspace on} $\mathcal{D}$ and is dense in $\mathcal{F}_{f}(\mathcal{H})$. \\
For each $x \in \mathcal{H}$, we define the operator $a_+ (f)$ on $\mathcal{F}_f (\mathcal{H})$ as follows:
\begin{equation}
\mathrm{Dom}(a_+ ) := \left\{ x \in \mathcal{F}_f (\mathcal{H})\ ;\ \sum_{N=0}^{\infty} N \| A_N (f \otimes x^{(N-1)}) \|^2 < \infty  \right\},
\end{equation}
\begin{equation}
(a_+ (f) x)^{(N)} := \sqrt{N} A_N (f \otimes x^{(N-1)})\ (N \geq 1)\quad, \quad  (a_+ (f) x)^{(0)} := 0.
\end{equation}
One can know that $b_+ (f)$ is closed. Since $\mathcal{F}_{f, fin} (\mathcal{D}) \subset \mathrm{Dom} (a_+ (f))$ holds, $a_+ (f)$ is densely defined. Therefore the adjoint operator
\begin{equation}
a (f) := a_+ (f)^*
\end{equation}
exists. Furthermore, since $a_+ (f)$ is closed,
\begin{equation}
a (f)^* := a_+ (f).
\end{equation}
$a (f)^*$ is called a {\it Fermion creation oprator} and $a (f)$ is called a {\it Fermion annihilation operator} respectively. \\
Then the following poposition is satisfied. Let $\Omega$ be a vacuum state.

\begin{proposition}
For any $f_j \in \mathcal{H}\ (j=1, \cdots , n)$,
\begin{equation}
(a(f_1)^* a(f_2)^* \cdots a(f_n)^* \Omega)^{(n)} = \sqrt{n!} A_n (f_1 \otimes \cdots \otimes f_n ),
\end{equation}
\begin{equation}
(a(f_1)^* a(f_2)^* \cdots a(f_n)^* \Omega)^{(N)} = 0 \quad ,\quad N \neq 0.
\end{equation}
\end{proposition}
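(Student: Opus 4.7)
The plan is to argue by induction on $n$, where the inductive step amounts to one application of the creation operator combined with a standard absorption identity for the antisymmetrizer $A_n$. (I read the second displayed equation as requiring $N\neq n$ rather than $N\neq 0$, since the first assertion already pins the vector into the $n$-particle sector.)

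For the base case $n=1$, I would read directly off the definition $a(f)^{*}=a_{+}(f)$. Since the vacuum $\Omega$ has $\Omega^{(0)}=1\in\mathcal{H}_{0}=\mathbb{C}$ and $\Omega^{(M)}=0$ for $M\geq 1$, the recursion gives $(a(f_{1})^{*}\Omega)^{(1)}=\sqrt{1}\,A_{1}(f_{1}\otimes 1)=f_{1}=\sqrt{1!}\,A_{1}(f_{1})$, while $(a(f_{1})^{*}\Omega)^{(N)}=0$ for every $N\neq 1$ because either $\Omega^{(N-1)}=0$ (for $N\geq 2$) or the defining convention $(a_{+}(f)x)^{(0)}=0$ kills the zeroth component.

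For the inductive step, assume the claim for $n-1$ and set $\Psi_{n-1}:=a(f_{2})^{*}\cdots a(f_{n})^{*}\Omega$, so that
\[
\Psi_{n-1}^{(n-1)}=\sqrt{(n-1)!}\,A_{n-1}(f_{2}\otimes\cdots\otimes f_{n}),\qquad \Psi_{n-1}^{(M)}=0\ \text{for}\ M\neq n-1.
\]
Apply the definition $(a_{+}(f_{1})\Psi_{n-1})^{(N)}=\sqrt{N}\,A_{N}(f_{1}\otimes\Psi_{n-1}^{(N-1)})$ sector by sector. All sectors with $N\neq n$ vanish (by the vanishing of $\Psi_{n-1}^{(N-1)}$ together with the $N=0$ convention), whereas for $N=n$ one obtains $\sqrt{n}\sqrt{(n-1)!}\,A_{n}\!\left(f_{1}\otimes A_{n-1}(f_{2}\otimes\cdots\otimes f_{n})\right)$, and since $\sqrt{n}\sqrt{(n-1)!}=\sqrt{n!}$ the absorption identity stated below collapses this to $\sqrt{n!}\,A_{n}(f_{1}\otimes\cdots\otimes f_{n})$.

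The main step to justify cleanly is the identity $A_{n}\circ(\mathrm{id}\otimes A_{n-1})=A_{n}$ on $\mathcal{H}^{\otimes n}$, where $\mathrm{id}\otimes A_{n-1}$ acts as the identity on the first factor and as $A_{n-1}$ on the remaining $n-1$ factors. I would deduce it from the group-theoretic fact that the permutations in $\mathcal{P}_{n}$ fixing $1$ form a subgroup isomorphic to $\mathcal{P}_{n-1}$: writing any $\sigma\in\mathcal{P}_{n}$ as a coset representative times such a fixing permutation $\tau$ and summing (with signs, in the antisymmetric convention) first over $\tau$ and then over representatives recovers exactly $A_{n}$, absorbing the inner $A_{n-1}$. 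Everything else is routine bookkeeping about which Fock sector each intermediate vector lives in.
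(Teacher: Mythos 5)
Your argument is correct. Note, however, that the paper offers no proof of this proposition at all: it is quoted as a standard fact about Fermion Fock spaces (in the spirit of Bratteli--Robinson), so there is nothing in the text to compare your route against. Your induction on $n$, reducing everything to the single absorption identity $A_n\circ(\mathrm{id}\otimes A_{n-1})=A_n$, is the standard way to supply the missing proof, and your coset justification of that identity is sound; equivalently one can observe that $A_nP_{\tilde\tau}=\mathrm{sgn}(\tilde\tau)A_n$ for the permutation $\tilde\tau$ fixing the first slot, so the inner sum over $\tilde\tau$ contributes $(n-1)!$ copies of $A_n$ and cancels the normalization of $A_{n-1}$. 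Two editorial points you handled correctly and should flag explicitly: the paper's second displayed equation must read $N\neq n$ rather than $N\neq 0$ (otherwise it contradicts the first equation), and the paper's definition $A_N:=\frac{1}{N!}\sum_{\sigma\in\mathcal{P}_N}P_\sigma$ omits the factor $\mathrm{sgn}(\sigma)$ needed for the antisymmetrizer; your proof tacitly uses the signed version, which is the one consistent with the rest of Section 3.
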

\noindent
The Fermion operators satisfy the following algebraic relations.

\begin{theorem}
For each $f \in \mathcal{H}$, $a(f),\ a(f)^* \in \mathbf{B}(\mathcal{F}_f (\mathcal{H}))$ and the following relation hold:
\begin{equation}
\{ a(f), a(g)^* \} = \< f, g \> \quad , \quad f, g \in \mathcal{H},
\end{equation}
\begin{equation}\label{car2}
\{ a(f), a(g) \} = 0 \quad , \quad \{ a(f)^* , a(g)^* \} = 0 \quad , \quad f, g \in \mathcal{H},
\end{equation}
where $\{ \cdot , \cdot \}$ is the anti commutator $\{ A , B \} := AB + BA$.
\end{theorem}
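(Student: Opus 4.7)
The plan is to verify the CAR relations on the dense finite-particle subspace $\mathcal{F}_{f,fin}(\mathcal{D})$ using the explicit formulas for the creation operator provided by the preceding proposition, and then derive boundedness from the CAR relations themselves before extending everything by continuity to $\mathcal{F}_f(\mathcal{H})$.

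First I would treat the two ``same-type'' relations $\{a(f)^*, a(g)^*\} = 0$ and $\{a(f), a(g)\} = 0$. Applying the proposition twice gives, for any $f_1, \ldots, f_n \in \mathcal{D}$,
\[
a(f)^* a(g)^* a(f_1)^* \cdots a(f_n)^* \Omega = \sqrt{(n+2)!}\, A_{n+2}(f \otimes g \otimes f_1 \otimes \cdots \otimes f_n),
\]
and swapping $f$ and $g$ introduces an overall minus sign because $A_{n+2}$ is antisymmetric under transposition of any two adjacent arguments. Since such vectors span $\mathcal{F}_{f,fin}(\mathcal{D})$, this establishes $\{a(f)^*, a(g)^*\} = 0$ on a dense set, and the companion relation $\{a(f), a(g)\} = 0$ follows by taking adjoints.

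Next, for the mixed relation $\{a(f), a(g)^*\} = \<f, g\>$, I would derive the explicit action
\[
a(f)\, a(g_1)^* a(g_2)^* \cdots a(g_n)^* \Omega = \sum_{k=1}^{n} (-1)^{k-1} \<f, g_k\>\, a(g_1)^* \cdots \widehat{a(g_k)^*} \cdots a(g_n)^* \Omega,
\]
with the hat denoting omission. This is obtained by testing against an arbitrary $(n-1)$-particle vector $a(h_1)^* \cdots a(h_{n-1})^* \Omega$, using $\<a(f)\xi, \eta\> = \<\xi, a(f)^*\eta\>$, and invoking the standard Gram determinant formula $\<A_n(x_1 \otimes \cdots \otimes x_n), A_n(y_1 \otimes \cdots \otimes y_n)\> = \frac{1}{n!}\det(\<x_i, y_j\>)$ for antisymmetrized tensors. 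Expanding the resulting $n \times n$ determinant along the column corresponding to $f$ recovers the claimed sum. Given this formula, the anticommutator $(a(f) a(g)^* + a(g)^* a(f))\, a(g_1)^* \cdots a(g_n)^* \Omega$ picks up only the $k=1$ contribution from $a(f)\, a(g)^* a(g_1)^* \cdots a(g_n)^* \Omega$, namely $\<f, g\>\, a(g_1)^* \cdots a(g_n)^* \Omega$; every remaining term is cancelled by the corresponding term of $a(g)^* a(f)\, a(g_1)^* \cdots a(g_n)^* \Omega$.

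Finally, boundedness comes at no extra cost from the CAR relation itself: the identity $a(f)^* a(f) + a(f) a(f)^* = \|f\|^2 \cdot 1$ writes $\|f\|^2$ as a sum of two positive operators, so each summand has norm at most $\|f\|^2$, giving $\|a(f)\| = \|a(f)^*\| \leq \|f\|$ on the dense subspace. The operators extend uniquely to elements of $\mathbf{B}(\mathcal{F}_f(\mathcal{H}))$, and the anticommutation relations persist by joint continuity of operator multiplication. The main obstacle I anticipate is the sign-and-normalization bookkeeping in the determinantal step: carefully justifying the $(-1)^{k-1}$ factors together with the $\sqrt{n!}$ normalizations so that the Laplace expansion of the Gram determinant matches the term-by-term sum produced by $a(f)$ moving past each of the creation operators.
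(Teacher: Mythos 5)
The paper offers no proof of this theorem: it is quoted as a known fact about the Fock representation, with the reference to Bratteli--Robinson, so there is nothing internal to compare your argument against. Your proposal is the standard and correct proof. The two same-type relations do follow from antisymmetry of $A_{n+2}$ under the transposition of the first two slots (note that the paper's definition $A_N = \frac{1}{N!}\sum_\sigma P_\sigma$ omits the factor $\mathrm{sgn}(\sigma)$; this is evidently a typo, and your use of the genuine antisymmetrizer is the intended reading), the mixed relation follows from the contraction formula you state, whose determinantal derivation and sign bookkeeping are routine, and the cancellation in the anticommutator works exactly as you describe. Two small points of hygiene: first, the boundedness step is cleanest run pointwise on the dense domain, i.e.\ $\|a(f)\xi\|^2 \le \langle \xi, (a(f)^*a(f)+a(f)a(f)^*)\xi\rangle = \|f\|^2\|\xi\|^2$ for $\xi \in \mathcal{F}_{f,fin}(\mathcal{D})$, rather than by taking operator norms of positive operators that are not yet known to be bounded; second, deducing $\{a(f),a(g)\}=0$ ``by taking adjoints'' before boundedness is established should be phrased weakly, $\langle \{a(f),a(g)\}\xi,\eta\rangle = \langle \xi, \{a(g)^*,a(f)^*\}\eta\rangle = 0$ for $\xi,\eta$ in the dense domain, which suffices. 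Neither point is a gap; the proof is sound.
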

In (\ref{car2}), if $f=g$, there hold
\begin{equation}
a(f)^2 = 0 \quad , \quad (a(f)^*)^2 = 0.
\end{equation}
These are the algebraic representations of the Pauli exclusion principle (\ref{pauli}).\\

\noindent
Now we state the definition of the CAR algebra.

\begin{definition}
A $C^*$-algebra generated by bounded operators $\{ a(f), a(f)^* \ ;\ f \in \mathcal{H}\}$ is called a {CAR algebra}.
\end{definition}
The CAR algebra is a $C^*$-algebra of observables of Fermion systems.




\section{Markov States and Chains on CAR Algebras}
In section 3, we showed that the CAR algebra describes the Fermi systems. In this section,  we construct Markov chains on CAR algebras following \cite{afm}.\\
Since the Markov chain is a stochastic process, it is natural to consider on $\mathbb{Z}_+ := \mathbb{N} \cup \{ 0 \}$ as follows.\\

\noindent
Let  $\mathcal{A}_{\mathbb{Z}_+}$ be the CAR algebra generated by $\{ a_i , a_i^* \ ;\ i \in \mathbb{Z}_+ \}$. Namely, $\{ a_i , a_i^* \ ;\ i\in \mathbb{Z}_+ \}$ satisfy:
\begin{equation}\label{car1}
(a_i)^* = a_i^* \quad , \quad \{ a_i^* , a_j \} = \delta_{ij} 1,
\end{equation}
\begin{equation}\label{car2}
\{ a_i , a_j \} = \{ a_i^* , a_j^* \} = 0 \quad, \quad i, j \in \mathbb{Z}_+ ,
\end{equation}
where $\{ \cdot , \cdot \}$ is the anti-commutator. For any $I \subset \mathbb{Z}_+$, $\mathcal{A}_I$ denotes the subalgebra generated by $\{ a_i , a_i^* \ ;\ i \in I \}$. In particular,
$$
\mathcal{A}_{n]} := \mathcal{A}_0 \bigvee \mathcal{A}_1 \bigvee \cdots \bigvee \mathcal{A}_{n}.
$$

\begin{remark}
In general, the *-algebra generated by a subset $\mathcal{M} \subset \mathcal{A}$ is given by
$$
\mathrm{l.i.h.}(\{ M_1 \cdots M_n\ ;\ n \in \mathbb{N}, \ M_j \in \mathcal{M} \cup \mathcal{M}^*,\ j= 1, \cdots , n \})
$$
\begin{equation}
= \left\{ \sum_{n=1}^N \lambda_n M_1^{(n)} \cdots M_m^{(n)} \ ;\ N \in \mathbb{N},\ \lambda \in \mathbb{C} \right\}.
\end{equation}
Especially, in CAR case, the *-algebra generated by $\{ a_i , a_i^* \ ;\ i \in \{ k\} \}$ is written as
\begin{equation}
\mathcal{A}_{k} := \left\{ \sum_{n=1}^N \lambda_n A_{i_1}^{(n)} \cdots A_{i_m}^{(n)}\ ;\ A_{i_k}^{(n)} \in \{ a_k ,a_k^* \} ,\ N \in \mathbb{N},\ \lambda_n \in \mathbb{C} \right\} .
\end{equation}
Due to the relations (\ref{car1})-(\ref{car2}), $\mathcal{A}_k$ becomes
\begin{equation}
\mathcal{A}_k = \{ \lambda_1 a_k + \lambda_2 a_k^* + \lambda_3 a_k^* a_k + \lambda_4 a_k a_k^* \ ;\ \lambda_n \in \mathbb{C} \} .
\end{equation}
Therefore, we only consider the elements $\{ a_k , a_k^* , a_k^* a_k , a_k a_k^* \}$ in $I = \{ k\} ,\ k \in \mathbb{Z}_+$ case.
\end{remark}

Put $\mathcal{A} := \mathcal{A}_{\mathbb{Z}_+}$.

\begin{definition}
We denote $\Theta_I$ as the unique automorphism on $\mathcal{A}$ which satisfy
\begin{equation}
\Theta_I (a_i) = -a_i \quad , \quad \Theta_I (a_i^*) = -a_i^* \quad , \quad (i \in I)
\end{equation}
\begin{equation}
\Theta_I (a_i) = a_i \quad , \quad \Theta_I (a_i^*) = a_i^* \quad , \quad (i \in I^c )
\end{equation}
$\Theta_I$ is called a {\it parity automorphism} on $\mathcal{A}$.
\end{definition}
\noindent
Let $\Theta := \Theta_{\mathbb{Z}_+}$. Then $\Theta$ induces even and odd parts of $\mathcal{A}$:
\begin{equation}
\mathcal{A}_+ := \{ a \in \mathcal{A}\ ;\ \Theta (a) = a \} \quad ,\quad \mathcal{A}_- := \{ a \in \mathcal{A}\ ;\ \Theta (a) = -a \} .
\end{equation}

\begin{definition}
Let $\mathcal{A},\ \mathcal{B}$ be CAR algebras and $E$ be a map from $\mathcal{A}$ to $\mathcal{B}$. If
\begin{equation}
E \circ \Theta = E
\end{equation}
holds, $E$ is called {\it even}.
\end{definition}

\noindent
If $E$ is even, 
\begin{equation}
E (a) = E (\Theta (a)) = -E(a) = 0
\end{equation}
holds for each $a \in \mathcal{A}_-$.


\begin{definition}
Let $\varphi$ be a state on $\mathcal{A}$. If there exists a quasi-conditional expectation $E_n$ w.r.t. $\mathcal{A}_{n-1]} \subset \mathcal{A}_{n]} \subset \mathcal{A}_{n+1]}$ which satisfy
\begin{equation}
\varphi_{n]} \circ E_n = \varphi_{n+1]},
\end{equation}
\begin{equation}
E_n (\mathcal{A}_{[n, n+1]}) \subset \mathcal{A}_n ,
\end{equation}
$\varphi$ is called a {\it Markov state}.
\end{definition}

\noindent
Let $\{ \mathcal{E}_n \}_{n \in \mathbb{N}}$ be a sequence of completely positive identity-preserving maps satisfying for each $n \in \mathbb{N}$:
\begin{equation}\label{en}
\mathcal{E}_n : \mathcal{A}_{[0, n+1]} \to \mathcal{A}_{[0, n]}
\end{equation}
\begin{equation}\label{due1}
\mathcal{E}_{n+1}|_{[0, n]} = 1_{[0, n]},
\end{equation}
\begin{equation}\label{comm}
\mathcal{E}_{n+1} \circ \alpha|_{[0, n+1]} = \alpha|_{[0,n]} \circ \mathcal{E}_n,
\end{equation}
where $\alpha$ is the one-step right shift. One can see that (\ref{comm}) gives the following diagram.
\begin{equation}
\xymatrix@C=36pt@R=36pt{
    \mathcal{A}_{[ 0, n+1]} \ar[r]^{\alpha|_{[0, n+1]}} \ar[d]_{\mathcal{E}_n} & \mathcal{A}_{[0, n+2]} \ar[d]^{\mathcal{E}_{n+1}} \\
     \mathcal{A}_{[ 0 , n ]} \ar[r]_{\alpha|_{[0,n]}} & \mathcal{A}_{[ 0, n+1 ]}
  }
\end{equation}
Let $\rho$ be a density operator on $\mathcal{A}_0$ which satisfies the stationarity in the sense of 
\begin{equation}\label{due2}
\rho = \rho \circ \mathcal{E}_0 \circ \alpha|_{0}.
\end{equation}
In the above notations one obtain the sequence of states $\{ \varphi_n \}_{n \in \mathbb{N}}$:
\begin{equation}
\varphi_n := \mathrm{Tr} \rho \circ \mathcal{E}_{0} \circ \cdots \circ \mathcal{E}_{n}.
\end{equation}
Due to (\ref{due1}), (\ref{comm}) and (\ref{due2}), one can extend $\varphi_n$ to a shift-invariant state $\varphi$ on $\mathcal{A}_{\mathbb{N}}$. The shift-invariant state is called the {\it quantum Markov chain} generated by $(\rho, \{ \mathcal{E}_n \}_{n \in \mathbb{N}})$.


\section{A Construction of Dynamical Entropy on CAR Algebras}

\noindent
In this section, giving an Umegaki conditional expectation \cite{moriya}, \cite{condexp}, we formulate dynamical entropy on CAR algebras.\\

\noindent
Let $\mathcal{A}_0$ be a CAR algebra generated by $\{ a_i, a_i^* \ ;\ i \in \{ 0\} \}$ and $\varphi_0$ a faithful normal state on $\mathcal{A}_0$. $\mathcal{E}_{\gamma, \theta}$ denotes an Umegaki conditional expectation from $\mathcal{A}_{\{0 , n\}}$ to $\mathcal{A}_{0}$ as
\begin{equation}
\mathcal{E}_{\gamma , \theta} := \frac{1}{2} \left( \mathrm{Tr}_{n} \left( \Theta^n \otimes \sum_{i=1}^2 \theta^{n-1} (\gamma_i)^* (1_{0}) \theta^{n-1} (\gamma_i) \right) \right)
\end{equation}
where $1_0$ is the identity of $\mathcal{A}_0$, $\theta$ is a *-automorphism on $\mathcal{A}_0$, and $\sum_{i=1} \gamma_i  = 1_0$ is an operator partition of $1_0$. Since $\theta (1_0 ) = 1_0$, $\sum \theta (\gamma_i)$ is again operator partition of $1_0$.\\

\noindent
Now we construct Markov chain with 1-time evolution:
$$
\varphi_0 (\mathcal{E}_{\gamma , \theta} (A_1 \otimes 1_0)) = \mathrm{Tr}_0 \rho_0 (\mathcal{E}_{\gamma , \theta} (A_1 \otimes 1_0))
$$
$$
= \mathrm{Tr}_0 \rho_0 \left( \frac{1}{2} \left( \mathrm{Tr}_1 \left(\Theta^1 \otimes \sum_i \theta (\gamma_i)^* \theta (\gamma_i)\right) (A_1 \otimes 1_0 )\right) \right)
$$
$$
= \frac{1}{2} \mathrm{Tr}_0 \rho_0 \left( \mathrm{Tr}_1 \left( A_{1, +} \otimes \sum_i \theta (\gamma_i)^* \theta (\gamma_i ) \right)\right)
=  \mathrm{Tr}_{[0,1]} \left( \frac{1}{2} A_{1, +} \otimes \sum_i \rho_0 \theta (\gamma_i)^* \theta (\gamma_i) \right).
$$
Since 
$$
A_{k, +} = a_k a_k^* + a_k^* a_k = 1 \quad , \quad \sum_i \theta (\gamma_i)^* \theta (\gamma_i) = \theta (\sum_i \gamma_i) = 1,
$$
we have
$$
\mathrm{Tr}_k \frac{1}{2} A_{k, +} = 1 \quad , \quad \mathrm{Tr}_0 \rho_0 \theta (\gamma_i)^* \theta (\gamma_i) = 1
$$
respectively. Therefore, we obtain a state through a Markov chain on $\mathcal{A}_{[0,1]}$ as
\begin{equation}
\rho_{[0, 1]} := \frac{1}{2} A_{1, +} \otimes \sum_i \rho_0 \theta (\gamma_i)^* \theta (\gamma_i).
\end{equation}

\noindent
Moreover, we consider the case of $n$-time evolution:
$$
\varphi_0 (\mathcal{E}_{\gamma , \theta} (A_1 \otimes \mathcal{E}_{\gamma , \theta} (A_2 \otimes \cdots \mathcal{E}_{\gamma , \theta} (A_{n-1} \otimes \mathcal{E}_{\gamma , \theta} (A_n \otimes 1_0 )) \cdots )))
$$
$$
= \varphi_0 \left( \mathcal{E}_{\gamma , \theta} \left(A_1 \otimes \mathcal{E}_{\gamma , \theta} \left(A_2 \otimes \cdots \mathrm{Tr}_n \left( \frac{1}{2} \left( \left( \Theta^n \otimes \sum_{i_n} \theta (\gamma_{i_n})^* \theta (\gamma_{i_n}) \right) (A_n \otimes 1_0 ) \right)\right) \cdots \right) \right) \right)
$$
$$
= \varphi_0 \left( \mathcal{E}_{\gamma , \theta} \left(A_1 \otimes \mathcal{E}_{\gamma , \theta} \left(A_2 \otimes \cdots \mathrm{Tr}_n \left( \frac{1}{2} \left( A_{n, +} \otimes \sum_{i_n} \theta (\gamma_{i_n})^* \theta (\gamma_{i_n}) \right)\right) \cdots \right) \right) \right)
$$
$$
=\frac{1}{2}\mathrm{Tr}_n (A_{n, +})\cdot \varphi_0 \left( \mathcal{E}_{\gamma , \theta} \left( A_1 \otimes \cdots \mathcal{E}_{\gamma , \theta} \left( A_{n-1} \otimes \sum_{i_n} \theta (\gamma_{i_n})^* \theta (\gamma_{i_n})\right) \cdots \right) \right)
$$
$$
= \frac{1}{2}\mathrm{Tr}_n (A_{n, +})\cdot \varphi_0 \left( \mathcal{E}_{\gamma , \theta} \left( A_1 \otimes \cdots  \mathrm{Tr}_{n-1} \left( \frac{1}{2} \left( A_{n-1, +} \otimes \sum_{i_n , i_{n-1}} \theta (\gamma_{i_{n-1}}^* \theta (\gamma_{i_n}^* \gamma_{i_n}) \gamma_{i_{n-1}}) \right) \right) \cdots \right) \right)
$$
$$
= \frac{1}{2^2}\mathrm{Tr}_{n, n-1}(A_{n , +} \otimes A_{n-1, +}) \cdot \varphi_0 \left( \mathcal{E}_{\gamma , \theta} \left( \cdots  \mathcal{E}_{\gamma , \theta} \left( A_{n-2} \otimes  \sum_{i_n , i_{n-1}} \theta (\gamma_{i_{n-1}}^* \theta (\gamma_{i_n}^* \gamma_{i_n}) \gamma_{i_{n-1}}) \right) \cdots \right)\right)
$$
$$
\vdots
$$
$$
= \frac{1}{2^n} \mathrm{Tr}_{\{n, \cdots ,1\}} (A_{n, +} \otimes A_{n-1, +} \otimes \cdots \otimes A_{1, +}) \cdot \varphi_0 (\sum_{i_n , \cdots , i_1} \theta (\gamma_{i_1}^* \cdots \theta (\gamma_{i_n}^* \gamma_{i_n}) \cdots \gamma_{i_1}))
$$
$$
= \frac{1}{2^n} \mathrm{Tr}_{\{n, \cdots ,1\}} (A_{n, +} \otimes A_{n-1, +} \otimes \cdots \otimes A_{1, +}) \cdot \mathrm{Tr}_0 \rho_0 (\sum_{i_n , \cdots , i_1} \theta (\gamma_{i_1}^* \cdots \theta (\gamma_{i_n}^* \gamma_{i_n}) \cdots \gamma_{i_1}))
$$
$$
= \frac{1}{2^n} \mathrm{Tr}_{\{0, \cdots , n\}}  \left(A_{n, +} \otimes A_{n-1, +} \otimes \cdots \otimes A_{1, +} \otimes \sum_{i_1 , \cdots , i_n} \rho \theta (\gamma_{i_1})^* \theta^2 (\gamma_{i_2})^* \cdots \theta^n (\gamma_{i_n}^* \gamma_{i_n}) \cdots  \theta^2 (\gamma_{i_2}) \theta (\gamma_{i_1}) \right).
$$
Hence a state thorough a Markov chain on $\mathcal{A}_{[0, n]}$ is given by
\begin{equation}
\rho_{[0, n]} := \frac{1}{2^n} \left(A_{n, +} \otimes A_{n-1, +} \otimes \cdots \otimes A_{1, +} \otimes \sum_{i_1 , \cdots , i_n}  \theta^n (\gamma_{i_n}) \cdots  \theta^2 (\gamma_{i_2}) \theta (\gamma_{i_1}) \rho \theta (\gamma_{i_1})^* \theta^2 (\gamma_{i_2})^* \cdots \theta^n (\gamma_{i_n})^* \right).
\end{equation}
Therfore, we get an $n$-time developped state on $\mathcal{A}_{[1, n]}$ as
$$
\rho_n := \frac{1}{2^n} \sum_{i_1 , \cdots , i_n}  \mathrm{Tr}_0 \theta^n (\gamma_{i_n}) \cdots  \theta^2 (\gamma_{i_2}) \theta (\gamma_{i_1}) \rho \theta (\gamma_{i_1})^* \theta^2 (\gamma_{i_2})^* \cdots \theta^n (\gamma_{i_n})^* 
$$
\begin{equation}
\times A_{n, +} \otimes A_{n-1, +} \otimes \cdots \otimes A_{1, +} 
\end{equation}
Now we focus on the constant of $\rho_n$. Denote
\begin{equation}\label{P}
P_{i_1, \cdots , i_n} := \mathrm{Tr}_{0} \theta^n (\gamma_{i_n}) \cdots  \theta^2 (\gamma_{i_2}) \theta (\gamma_{i_1}) \rho \theta (\gamma_{i_1})^* \theta^2 (\gamma_{i_2})^* \cdots \theta^n (\gamma_{i_n})^* .
\end{equation}
Obviously, then there holds
$$
\sum P_{i_1, \cdots , i_n} = 1.
$$
Hence, in the above notations, we can define dynamical entropy on the CAR algebra as following.

\begin{definition}
For a quadruple $(\mathcal{A}, \varphi_0 , \gamma , \theta)$, the {\it dynamical entropy} on the CAR algebra is given by
\begin{equation}\label{dyn}
h_{\varphi_0}(\theta) := \sup_{\gamma} \left\{ - \limsup_{n \to \infty} \frac{1}{n} \sum_{i_1 , \cdots , i_n} P_{i_1 , \cdots , i_n} \log P_{i_1 , \cdots , i_n} \right\} .
\end{equation}
\end{definition}


\section{Model Computation}
In this section, we calculate the complexity of a time-evolution of a simple model associated with the quantum spin system using the introduced dynamical entropy (\ref{dyn}).\\

\noindent
Let 
\begin{equation}
e_0 :=
\begin{pmatrix}
0 \\
1
\end{pmatrix}
\quad , \quad e_1 :=
\begin{pmatrix}
1\\
0
\end{pmatrix}
\end{equation}
be orthonormal system on $\mathbb{C}^2$ and let 
\begin{equation}
a^*_0 :=
\begin{pmatrix}
0 & 1\\
0 & 0
\end{pmatrix}
\quad , \quad
a_0 :=
\begin{pmatrix}
0 & 0\\
1 & 0
\end{pmatrix}
.
\end{equation}
$e_0$ and $e_1$ describe the models of spin-down and spin-up of a quantum spin system respectively.
Furthermore, by the operations:
\begin{equation}
e_0 \overset{a_0^*}{\mapsto} e_1 \overset{a_0^*}{\mapsto} 0 \quad , \quad e_1 \overset{a_0}{\mapsto} e_0 \overset{a_0}{\mapsto} 0 ,
\end{equation}
one can see that $a_0^*$ is the creation operator and $a_0$ is the annihilation operator of this system. Since
$$
a_0^* a_0 =
\begin{pmatrix}
1 & 0\\
0 & 0
\end{pmatrix}
\quad , \quad
a_0 a_0^* =
\begin{pmatrix}
0 & 0\\
0 & 1
\end{pmatrix}
,
$$
$\mathcal{A}_0 := M(2, \mathbb{C})$ is generated by $\{ a_0^*, a_0 \}$. Now we investigate the complexity of a time development on this system using our dynamical entropy. According to
$$
a_0^* a_0 + a_0 a_0^* = 1 \quad , \quad (a_0^* a_0)^* = a_0^* a_0,
$$
we put:\\
An operatior partition of the identity $1 \in \mathcal{A}_0$:
\begin{equation}
\gamma_1 := a_0^* a_0 \quad, \quad \gamma_2 := a_0 a_0^* ,
\end{equation}
a *-automorphism:
$$
\theta (a) := U a U^* \equiv e^{ia_0^* a_0} a e^{-i a_0^* a_0} = 
$$
\begin{equation}
= 
\begin{pmatrix}
e^{i} & 0\\
0 & 1
\end{pmatrix}
a
\begin{pmatrix}
e^{-i} & 0\\
0 & 1
\end{pmatrix}
= (e^{i}a_0^* a_0 + a_0 a_0^* ) a (e^{-i} a_0^* a_0 + a_0 a_0^* ) \quad , \quad a \in \mathcal{A}_0 .
\end{equation}
Moreover, we denote a density matrix by
\begin{equation}
\rho := 
\begin{pmatrix}
\lambda & 0\\
0 & 1-\lambda
\end{pmatrix}
= \lambda a_0^* a_0 + (1 - \lambda) a_0 a_0^* \quad , \quad 0 \le \lambda \le 1.
\end{equation}
Due to
$$
\theta^n (\gamma_{i_n}) \theta^{n-1} (\gamma_{i_{n-1}}) \cdots \theta (\gamma_{i_1}) =
$$
$$
=  U^{n} \gamma_{i_n} U^{* n} U^{n-1} \gamma_{i_{n-1}} U^{* n-1} \cdots U \gamma_{i_1} U^* 
$$
$$
= U^{n} \gamma_{i_n} U^{*} \gamma_{i_{n-1}} U^{*} \cdots U^* \gamma_{i_1} U^* ,
$$
(\ref{P}) becomes
$$
P_{i_1 , \cdots , i_n} = \mathrm{Tr} U^{n} \gamma_{i_n} U^{*} \gamma_{i_{n-1}} U^{*} \cdots U^* \gamma_{i_1} U^* \rho U \gamma_{i_1} U \cdots U \gamma_{i_{n-1}} U \gamma_{i_n} U^{* n}
$$
$$
= \mathrm{Tr} \gamma_{i_n} U^{*} \gamma_{i_{n-1}} U^{*} \cdots U^* \gamma_{i_1} U^* \rho U \gamma_{i_1} U \cdots U \gamma_{i_{n-1}} U \gamma_{i_n}.
$$
Now we mention the following result.
\begin{lemma}
Under the above notations,
\begin{equation}\label{lemofP}
P_{1, 1, \cdots , 1} = \lambda \ ,\ P_{1, 2, \cdots , 1} = \cdots = P_{2, 2, \cdots , 1} = 0 \ ,\ P_{2, 2, \cdots , 2} = 1 - \lambda .
\end{equation}
\end{lemma}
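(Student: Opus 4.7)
\noindent
The plan is to exploit the fact that the chosen automorphism $\theta$ leaves both partition elements $\gamma_1$ and $\gamma_2$ invariant, so the kernel $P_{i_1,\ldots,i_n}$ in (\ref{P}) collapses to a trace involving only orthogonal projections applied to $\rho$. I would begin by observing that $U = e^{i a_0^* a_0} = e^{i \gamma_1}$ is a function of $\gamma_1$ alone; since $\gamma_1 + \gamma_2 = 1$ and $\gamma_1 \gamma_2 = 0$, either the functional calculus or a direct expansion of the exponential gives $U = e^i \gamma_1 + \gamma_2$. In particular $U$ commutes with both $\gamma_1$ and $\gamma_2$, so $\theta^k(\gamma_j) = U^k \gamma_j U^{*k} = \gamma_j$ for every $k \in \mathbb{N}$ and $j \in \{1,2\}$.

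Substituting this invariance into (\ref{P}) yields
\[
\theta^n(\gamma_{i_n})\, \theta^{n-1}(\gamma_{i_{n-1}}) \cdots \theta(\gamma_{i_1}) \;=\; \gamma_{i_n}\, \gamma_{i_{n-1}} \cdots \gamma_{i_1}.
\]
The orthogonality $\gamma_1 \gamma_2 = \gamma_2 \gamma_1 = 0$ together with the idempotency $\gamma_j^2 = \gamma_j$ then forces this product to equal $\gamma_j$ when $i_1 = \cdots = i_n = j$ and to vanish as soon as the string $(i_1,\ldots,i_n)$ contains both values, which immediately accounts for the whole middle block of zeros in the lemma. For the two surviving constant strings, a direct computation from $\rho = \lambda \gamma_1 + (1-\lambda) \gamma_2$ gives $\mathrm{Tr}_0\, \gamma_1 \rho \gamma_1 = \lambda\, \mathrm{Tr}_0\, \gamma_1 = \lambda$ and $\mathrm{Tr}_0\, \gamma_2 \rho \gamma_2 = 1 - \lambda$, which are the first and last claims.

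There is essentially no analytic obstacle here—the only point to verify carefully is that $U$ commutes with the spectral projections of $a_0^* a_0$, which is automatic because $U$ is itself a function of $a_0^* a_0$. Conceptually, the chosen $\theta$ acts trivially on the atoms of the partition $\{\gamma_1, \gamma_2\}$, so the induced classical process is deterministic on that partition and $P_{i_1,\ldots,i_n}$ merely reads off the diagonal weights $\lambda$ and $1 - \lambda$ of $\rho$.
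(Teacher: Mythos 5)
Your proposal is correct and follows essentially the same route as the paper: both arguments rest on the orthogonality $\gamma_1\gamma_2=0$ killing every mixed string and on the diagonal $\rho$ giving $\mathrm{Tr}_0\,\gamma_1\rho\gamma_1=\lambda$, $\mathrm{Tr}_0\,\gamma_2\rho\gamma_2=1-\lambda$. The only cosmetic difference is that you observe $\theta^k(\gamma_j)=\gamma_j$ up front (since $U=e^{i}\gamma_1+\gamma_2$ commutes with both projections), whereas the paper carries the phases $e^{\pm ni}$ through the product $U\gamma_1=e^{i}\gamma_1$, $U\gamma_2=\gamma_2$ and cancels them at the end.
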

\begin{proof}
According to the anti-commutation relation (\ref{car2}),
$$
U \gamma_1 = (e^i a_0^* a_0 + a_0 a_0^* ) a_0^* a_0 = e^i a_0^* a_0.
$$
The above equation gives
$$
U \gamma_1 U \gamma_1 \cdots \overset{n-th}{U \gamma_1} = (e^i a_0^* a_0)^n = e^{ni} a_0^* a_0 \quad , \quad
\gamma_1 U^* \gamma_1 \cdots \overset{n-th}{\gamma_1 U^*} = (e^{ni} a_0^* a_0)^* = e^{-ni} a_0^* a_0 .
$$
Therefore, if $(i_1 , i_2 , \cdots , i_n) = (1, 1, \cdots , 1)$, there holds
$$
P_{1, 1, \cdots , 1} =  \mathrm{Tr} \gamma_{1} U^{*} \gamma_{1} U^{*} \cdots U^* \gamma_{1} U^* \rho U \gamma_{1} U \cdots U \gamma_{1} U \gamma_{1} =
$$
\begin{equation}
= \mathrm{Tr} e^{-ni} a_0^* a_0 (\lambda a_0^* a_0 + (1 - \lambda) a_0 a_0^* ) e^{ni} a_0^* a_0 
= \mathrm{Tr} \lambda a_0^* a_0 = \lambda. 
\end{equation}
Moreover, 
$$
\gamma_1 U \gamma_2 = a_0^* a_0 (e^i a_0 ^* a_0 + a_0 a_0^* ) a_0 a_0^* = a_0^* a_0 \cdot a_0 a_0^* = 0.
$$
This implies that
\begin{equation}
P_{1, 2, \cdots , 1} = \cdots = P_{2, 2, \cdots , 1} = 0.
\end{equation}
Finally, the equation:
$$
U \gamma_2 = (e^i a_0^* a_0 + a_0 a_0^* ) a_0 a_0^* = a_0 a_0^* 
$$
induces
$$
U \gamma_2 U \gamma_2 \cdots \overset{n-th}{U \gamma_2} = (a_0 a_0^* )^n = a_0 a_0^* \quad , \quad
\gamma_2 U^* \gamma_2 \cdots \overset{n-th}{\gamma_2 U^*} = ((a_0 a_0^*)^* )^n = a_0 a_0^* .
$$
Hence we obtain
\begin{equation}
P_{2, 2, \cdots , 2} = \mathrm{Tr} a_0 a_0^* (\lambda a_0^* a_0 + (1 - \lambda) a_0 a_0^* ) a_0 a_0^* = \mathrm{Tr} (1 - \lambda) a_0 a_0^* = 1 - \lambda . \  \square
\end{equation}

\end{proof}

\noindent
From the above lemma, we then get the dynamical entropy (\ref{dyn}):
\begin{equation}
h_{\rho} (\theta) = -\lim_{n \to \infty} \frac{1}{n} ((1-\lambda) \log (1-\lambda) + \lambda \log \lambda) = 0.
\end{equation}
\begin{remark}
In general, the complexities of unitary operators on a unital $C^*$-algebra are 0 \cite{sto}. Therefore, this result also tells us that our dynamical entropy is defined correctly. 
\end{remark}


\section{Conclusion}
In this paper, based on the construction method of the AOW entropy, we have defined a new type of dynamical entropy on CAR algebras using an Umegaki conditional expectation from $\mathcal{A}_{\{0, n\}}$ to $\mathcal{A}_0$ with a *-automorphisms on a local algebra $\mathcal{A}_0$. Moreover, we have computed the introduced entropy for a $2 \times 2$ matrix algebra case.\\
\quad Incidentally, if one investigate the complexities of the  lattice translation or the Bogoliubov automorphism using dynamical entropy through a Markov chain, we think that it is necessary to reformulate our dynamical entropy with completely positive identity preserving map from $\mathcal{A}_{[0, n+1]}$ to $\mathcal{A}_{[0, n]}$ associated with a *-automorphism on $\mathcal{A}_{\mathbb{Z_+}}$. 

\bigskip

\noindent
{\bf Acknowledgements} The authors would like to thank Prof. Accardi for his useful suggestions and warm encouragements.


\begin{thebibliography}{99}
\bibitem{ncm} L. Accardi: Noncommutative Markov chains, in School of Mathematical Physics, Camerino, 268-295 (1974). 
\bibitem{afm} L. Accardi, F. Fidaleo, F. Mukhamedov : Markov states and chains on the CAR algebra, Inf. Dymen. Anal. Quantum Probab., Rel. Top. {\bf 10}, 165-184 (2007).
\bibitem{afl} L. Accardi, A. Frigerio, J. Lewis : Quantum stochastic processes. Publ.
RIMS Kyoto Univ. 18, 97-133 (1982).
\bibitem{fqp} L. Accardi, N. Obata : Foundations of Quantum Probability Theory.
Makino Pub. Co., Tokyo (2003).   
\bibitem{acnote} L. Accardi, M. Ohya, N. Watanabe : Note on quantum dynamical entropies, Report on Math. Phys., {\bf 38}, 457-469 (1996).
\bibitem{aow} L. Accardi, M. Ohya, N. Watanabe : Dynamical entropy through quantum Markov chains, Open Sys. \& Information Dyn. {\bf4}, 71-87 (1997).
\bibitem{qmcuni} L. Accardi, A. Souissi, S. Gheteb : Quantum Markov Chains: unification approach, arXiv:  1811.00500, 1 Nov (2018).
\bibitem{moriya} H. Araki, H. Moriya : Equilibrium Statistical Mechanics of Fermion Lattice Systems, Rev. in Math. Phys., {\bf 15}, 02, pp. 93-198 (2003).
\bibitem{bili} L. Bilingsley : Ergodic Theory and Information, Wiley, New York (1965).
\bibitem{br2} O. Bratteli, D. W. Robinson : Opertor Algebras and Quantum Statistical Mechanics II, Springer, New York (1987).
\bibitem{kol} A. N. Kolmogorov : Theory of transmission of information. Am. Math. Soc. Transl. Ser. 2 {\bf33}, 291-321 (1963).
\bibitem{sep} H. Moriya : Separability condition for composite systems of distinguishable fermions, arXiv : quant-ph/0405166v2, 5 Sep (2005).
\bibitem{sto} S. Neshveyev, E. St\o rmer : Dynamical Entropy in Operator Algebras, Springer, Berlin (2006).
\bibitem{ow} K. Ohmura, N. Watanabe : Quantum Dynamical Mutual Entropy Based on AOW Entropy, Open Syst. Inf. Dyn., {\bf 26}, 2, pp. 1950009-1 - pp. 1950009-16 (2019).
\bibitem{itsuse} M. Ohya, D. Petz : Quantum Entropy and its Use, Springer, Berlin (1993).
\bibitem{condexp} H. Umegaki : Conditional expectations in an operator algebra, IV, (entropy and information), Kodai Mathematical Seminar Reports, {\bf14}, 59-85 (1962).
\bibitem{muto} N. Watanabe, M. Muto : Note on transmitted complexity for quantum dynamical systems, Philosophical Transactions Royal Society {\bf A 375}, pp. 20160396-1-pp. 20160936-16 (2017)
\end{thebibliography}
\end{document}